\theoremstyle{plain}
\newtheorem{theorem}{Theorem}[section]
\newtheorem{proposition}[theorem]{Proposition}
\theoremstyle{definition}
\newtheorem{definition}{Definition}
\theoremstyle{remark}
\newtheorem{remark}{Remark}
\title{{
On the weak order ideal associated to linear codes}}
\author{Mijail Borges-Quintana, \and~{Miguel~\'Angel Borges-Trenard}\thanks{M.~Borges-Quintana and M.A.~Borges-Trenard are with the Department
of Department of Mathematics, Faculty of Ciencias Naturales y Exactas, Universidad de Oriente, Santiago de Cuba, Cuba. Emails: mijail@uo.edu.cu, mborges@uo.edu.cu. M. Borges-Quintana has been partially supported by a Post-doctorate scholarship at the University of Valladolid (09-2014 to 02-2015) by Erasmus Mundus Program, Mundus Lindo Project.} \and~Edgar Mart\'inez-Moro
\thanks{E.~Mart\'inez-Moro is with the Institute of Mathematics IMUVa, University of Valladolid. Valladolid, Castilla, Spain. Email: edgar@maf.uva.es. Partially supported by Partially
supported
by
the
Spanish
MINECO
under
grants
MTM2015-65764-C3-1-P
and
MTM2015-69138-REDT.}}
\date{}
\begin{document}



\maketitle






\begin{abstract}
{
In this work we study a weak order ideal associated with the coset leaders of a non-binary linear code. This set allows the incrementally computation of the coset leaders and the definitions of the set of  leader codewords. This set of codewords has some nice properties related to the  monotonicity  of the weight compatible order on the generalized support of a vector in $\mathbb F_q^n$ {
which} allows 
to describe a test set,  a trial set and the set of zero neighbours of a linear code in terms of the leader codewords.}
\end{abstract}

{\bf keywords} Linear codes, {
order ideals}, test set, trial set, zero neighbours, correctable errors


\section{Introduction}

As it is pointed in \cite{klove} it is common folklore in the theory  
of binary linear codes that there is an ordering on the coset leaders chosen as  
the lexicographically smallest minimum weight vectors that provides a  
monotone structure. This is expressed as follows: if  
$\mathbf x$ is a coset leader  and $\mathbf y\subseteq \mathbf x$ (i.e.  
$y_i\leq x_i$ for all $i$)  then $\mathbf y$ is also a coset leader.  
This nice property has been proved of great value, see for example  
\cite{Zemor}, and it has been used for analyzing the error-correction  
capability of binary linear codes \cite{klove}. In this last paper  
the authors introduce  the concept of a \textit{trial set} of codewords  
and they provide a gradient-like decoding algorithm based on this set.

Finding the {
weight} distribution 
of cosets leaders for a code $\mathcal C$ is a classic problem in {
coding theory}. 
This problem is still unsolved for many family of linear codes even for first-order Reed-Muller codes (see \cite{kurshan:1972}).
{
The set of coset leaders is also related with the minimum distance decoding and bounded distance decoding problems as well with   the set of minimal support codewords.  \cite{barg:1998, marquez:2011,massey1993}}

Despite {
their} interest no generalization  of these ideas is known  by the authors of this communication to 
{
non-binary case}. In this paper we provide {
such a} (non straightforward) generalization. 

{
The outline of the paper is as follows Section~\ref{sec:Prel} introduces the idea of a generalized support of a vector 
In Section~\ref{sec:LCodW}  is is defined  the weak order ideal associated with the coset leaders ${\mathcal O}({\mathcal C})$ and it is shown that can be computed incrementally. Theorem~\ref{Theorem4} establishes that all the coset leaders of the code belong to ${\mathcal O}({\mathcal C})$. 
Subsection~\ref{s:LCwords} is devoted to the study of the set of leader codewords of a code as a zero neighbour set and their properties.  
Finally in Section~\ref{Tset-LCW} we analyze the correctable and uncorrectable errors defining a trial set for a linear code from the set of leader codewords.}

The limitations from a practical point view of the results and properties studied in this paper are clear because {
of the size  and the complexity of computing  the set of coset leaders}. Anyway our {
main} interest is   the study and characterizations of some objects related to the codes like zero neighbours, trial set, and the set of correctable and uncorrectable errors. 

\section{Preliminaries}\label{sec:Prel}

From now on we  shall denote by  $\mathbb F_q$ the finite field with 
{
with $q=p^m$ elements, $p$ a prime}. A \emph{linear code} $\mathcal C$ over $\mathbb F_q$ of length $n$ and dimension $k$ 
is a $k$-dimensional subspace of $\mathbb F_q^n$. We will call the vectors $\mathbf v$ in $\mathbb F_q^n$ words and 
 {
 those} $\mathbf v\in \mathcal C$, codewords. For every word $\mathbf v\in \mathbb F_q^n$ its \emph{support} is defined as 
 $\mathrm{supp}(\mathbf v) = \left\{ i \mid v_i \neq 0\right\}$ and its \emph{Hamming weight}, denoted by $\mathrm{w_H}(\mathbf v)$ as the cardinality of $\mathrm{supp}(\mathbf v)$ and the \emph{Hamming distance} $\mathrm{d_H}(\mathbf x, \mathbf y)$ between two words $\mathbf x, ~\mathbf y \in \mathbb F_q^n$ is  
$\mathrm{d_H}(\mathbf x, \mathbf y) = \mathrm{w_H}(\mathbf x - \mathbf y)$. The \emph{minimum distance} $\mathrm{d}(\mathcal C)$ of a linear code $\mathcal C$ is defined as the minimum weight among all nonzero codewords.   
  
The words of minimal Hamming weight in the cosets of $\mathbb F_q^n / \mathcal C$ is the \textit{set of coset leaders} of the code $\mathcal C$ in $\mathbb F_q^n$ and we will denote it by $\mathrm{CL}(\mathcal C)$.  $\mathrm{CL}(\mathbf y)$ will denote the subset of coset leaders corresponding to the coset $\mathbf y+\mathcal C$.  
 Given a coset $\mathbf y+\mathcal C$ we define the \emph{weight of the coset} $\mathrm{w_H}(\mathbf y+\mathcal C)$ as the smallest Hamming weight among all vectors in the coset, or equivalently the weight of one of its leaders. It is well known that {
 given} $t= \lfloor \frac{d(\mathcal C)-1}{2}\rfloor$ 
 where $\lfloor \cdot \rfloor$ denotes the greatest integer function then every coset of weight at most $t$ has a unique coset leader.
  
{
Let $f(X)$ be} an irreducible polynomial over $\mathbb F_p$ of degree $m$ 
{
and} $\beta$ be a root of $f(X)$, then 
any element $a \in \mathbb F_q$ can be represented as 
$a_1 + a_2 \beta + \ldots + a_{m}\beta^{m-1}$ with $
a_i \in \mathbb F_p$ for $i \in \{1, \ldots, m\}$.
For a word $\mathbf v=(v_1, \ldots, v_n) \in \mathbb F_q^n$, such that the $i$-th component of $\mathbf v$ is
$v_i = v_{i,1} + v_{i,2} \beta + \ldots + v_{i,m}\beta^{m-1}$
we define the \emph{generalized support} of a vector $\mathbf v$  as the support of the $nm$-tuple given by the concatenations of the $p$-adic expansion of each component $\mathbf v_i$ of $\mathbf v$, i.e.{
\[\mathrm{supp}_{\mathrm{gen}}(\mathbf v) = (\mathrm{supp}((v_{i,1}, \ldots, v_{i,m})):\, i=1\ldots n),\]
and $\mathrm{supp}_{\mathrm{gen}}(\mathbf v)[i]=\mathrm{supp}((v_{i,1}, \ldots, v_{i,m}))$.}
We will say that $(i,j)\in \mathrm{supp}_{\mathrm{gen}}(\mathbf v)$ if the corresponding $v_{i,j}$ is not zero. From now on the set $$\left\{ \mathbf e_{ij}=\beta^{j-1}\mathbf e_i:\; i = 1, \ldots, n;\, j= 1,\ldots m \right\}$$ will be denoted as $\mathrm{Can}(\mathbb F_q,f)$ and it represents the canonical basis of {$(\mathbb F_q^n,+)$, the additive monoid $\mathbb F_q^n$ with respect to the \lq\lq+\rq\rq$\,$ operation, where $f$ is the irreducible polynomial used to define $\mathbb F_q$}.

We state the following connection between $\mathbb F_q^n$ and $\mathbb N^{nm}$:
\begin{equation*}\begin{split}
\Delta: \mathbb{F}_q^n \to &\, \mathbb N^{nm}\\
{\mathbf v} \mapsto &\, (\psi(v_{i,j}):\,i=1,\ldots,n,\,j=1,\ldots,m),\mbox{ where}\\
\psi: \mathbb F_p\to &\, \mathbb N\\
k \cdot 1_{\mathbb F_p}\mapsto &\, k \,\mbox{mod}\,p.
\end{split}\end{equation*}

On the other hand,
\begin{equation*}\begin{split}
\nabla: \mathbb N^{nm} \to &\, \mathbb{F}_q^n\\
{\mathbf a} \mapsto &\, ((a_{m(i-1)+1} + a_{m(i-1)+2} \beta + \ldots + a_{m(i-1)+m}\beta^{m-1}):\,i=1,\ldots,n).\\
\end{split}\end{equation*}

Given $\mathbf x,\mathbf y\in (\mathbb F_q^n,+)$, $\mathbf x=\sum_{i,j}x_{ij}\mathbf e_{ij}$, $\mathbf y=\sum_{i,j}y_{ij}\mathbf e_{ij}$, we say $\mathbf x\subset \mathbf y$ if $\psi(x_{ij})\leq \psi(y_{ij})$ for all $i\in [1,n]$, and $j\in[1,m]$.

By using $\Delta$ it is possible to relate orders on $\mathbb F_q^n$ with orders on $\mathbb N^{nm}$, and vice versa.  An \textit{admissible order} on $(\mathbb N^{nm},+)$ is a total order $<$ on $\mathbb N^{nm}$ satisfying the following two conditions
\begin{enumerate}
\item $\mathbf 0 <\mathbf x$, for all $\mathbf x\in \mathbb N^{nm},\,\mathbf x\neq \mathbf 0$.
\item If $\mathbf x<\mathbf y$, then $\mathbf x + \mathbf z<\mathbf y + \mathbf z$, for all $\mathbf z\in \mathbb N^{nm}$.
\end{enumerate}
{
In particular, Any admissible order on $({\mathbb N}^{nm},+)$, like the Lexicographical, Degree Lexicographical, Degree Reverse Lexicographical orders, induces an order on $(\mathbb F_q^n,+)$.}

We will say that a representation of a word $\mathbf v$ as an $nm$-tuple over $\mathbb N$ is in \textit{standard form} if $\Delta(\nabla(\mathbf v))=\mathbf v$. 
We will denote the standard form of $\mathbf v$ as $\mathrm{SF}(\mathbf v,f)$ (note that $\nabla(\mathbf v) = \nabla(\mathrm{SF}(\mathbf v,f))$). Therefore, $\mathbf v$ is in standard form if $\mathbf v = \mathrm{SF}(\mathbf v,f)$ (we will also say $\mathbf v \in \mathrm{SF}(\mathbb F_q^n,f)$).

{\begin{remark} From now on we will use  $\mathrm{Can}(\mathbb F_q)$ and $\mathrm{SF}(\mathbb F_q^n)$ instead of $\mathrm{Can}(\mathbb F_q,f)$ and $\mathrm{SF}(\mathbb F_q^n,f)$ respectively, since it is clear that different elections of $f$ or $\beta$ provide equivalent generalized supports.
\end{remark}}


{
\begin{definition}
\label{d:oideal}
A subset ${\mathcal O}$ of $\mathbb N^{k}$ is an order ideal if for all $\mathbf w \in {\mathcal O}$ and $v\in \mathbb N^{k}$ s.t. $\mathbf v_i\leq \mathbf w_i$, $i=1,\ldots, k$, we have $v\in {\mathcal O}$.
\end{definition}
In the same fashion we say that a subset $\mathcal S$ of $\mathbb F_q^n$ is an order ideal if $\Delta({\mathcal S})$ is an order ideal in $\mathbb N^{nm}$.  It is easy to check that an equivalent definition for the order ideal would be that for all $\mathbf w \in \mathcal S$, and for all $(i,j)\in \mathrm{supp}_{\mathrm{gen}}(\mathbf w)$, and $\mathbf v\in \mathbb F_q^n$ s.t. $\mathbf w = \mathbf v + \mathbf e_{ij}$ we have $\mathbf v \in \mathcal S$. If instead of for all $(i,j)\in \mathrm{supp}_{\mathrm{gen}}(\mathbf w)$ the condition is satisfied at least for one $(i,j)\in \mathrm{supp}_{\mathrm{gen}}(\mathbf w)$ we say that ${\cal S}$ is a {\em weak order ideal}.

\begin{definition}
\label{d:woideal}
A subset ${\mathcal S}$ of $\mathbb F_q^{n}$ is a weak order ideal if for all $\mathbf w \in {\mathcal S}\setminus {\mathbf 0}$ there exists $(i,j)\in \mathrm{supp}_{\mathrm{gen}}(\mathbf w)$ s.t. for $\mathbf v\in \mathbb F_q^n$ s.t. $\mathbf w = \mathbf v + \mathbf e_{ij}$ we have $\mathbf v \in \mathcal S$.
\end{definition}
}

\begin{definition}
\label{Voronoi:Definition}
The  \emph{Voronoi region} of a codeword $\mathbf c \in \mathcal C$ 
is the set
\[\mathrm D(\mathbf c) = \left\{ \mathbf y \in \mathbb F_q^n \mid
\mathrm{d_H}(\mathbf y, \mathbf c)\leq \mathrm{d_H}(\mathbf y, \mathbf c'),\,\forall \mathbf c'\in \mathcal C\setminus \{ \mathbf c \} \right\}.\]
\end{definition}
The set of all the Voronoi regions for a given linear code $\mathcal C$ covers the space $\mathbb F_q^n$ and 
$\mathrm D(\mathbf 0) = \mathrm{CL}(\mathcal C)$. However, some words in $\mathbb F_q^n$ may be contained in several regions.
For any subset $A \subset \mathbb F_q^n$ we define $$\mathcal X (A) = \left\{ \mathbf y \in \mathbb F_q^n \mid \min \left\{ \mathrm{d_H}(\mathbf y, \mathbf a):\mathbf a \in A\right\} = 1\right\}$$ as the set of words at Hamming distance $1$ from $A$, i.e.
The \emph{boundary} of $A$ is defined as $\delta (A) = \mathcal X (A) \cup \mathcal X (\mathbb F_q^n \setminus A)$.

\begin{definition}
A nonzero codeword $\mathbf z \in \mathcal C$ is called a \emph{zero neighbour} if its Voronoi region shares a common boundary with the set of coset leaders, i.e. 
$\delta (\mathrm D(\mathbf z)) \cap \delta (\mathrm D(\mathbf 0)) \neq \emptyset.$
The set of all zero neighbours of $\mathcal C$ is denoted by $\mathcal Z(\mathcal C)= \left\{ \mathbf z \in \mathcal C \setminus \{ \mathbf 0\}
~:~ \delta (\mathrm D(\mathbf z)) \cap \delta (\mathrm D(\mathbf 0)) \neq \emptyset
\right\}.$
\end{definition}

\begin{definition}
A \emph{test-set} $\mathcal T$ for a given linear code $\mathcal C$ is a set of codewords such that every word $\mathbf y$ \begin{enumerate}
  \item either $\mathbf y$ lies in $\mathrm D(\mathbf 0)$ 
  \item or there exists $\mathbf v\in \mathcal T$ such that $\mathrm{w_H}(\mathbf y- \mathbf v)< \mathrm{w_H}(\mathbf y)$.
\end{enumerate}
\end{definition}

The set of zero neighbours is a test set, also from the set of zero neighbours can be obtained any minimal test set according to the cardinality of the set  \cite{barg:1998}.

\section{{
The weak order ideal of the coset leaders}}
\label{sec:LCodW}

The first idea that allows us to compute incrementally the set of all coset leaders for a linear code was introduced in \cite{borges:2007a}. In that paper we used the additive structure of $\mathbb F_q^n$ with the set of canonical generators $\mathrm{Can}(\mathbb F_q)$. Unfortunately in \cite{borges:2007a}  most of the chosen coset representatives may not be coset leaders if the weight of the coset is greater than $t$.

\begin{theorem}[\cite{huffman:2003}{, Theorem 1.12.6.v}]
\label{TheoremH}
Assume that $\mathbf x$ is a coset leader of $\mathcal C$. If $\mathbf x^\prime \in \mathbb F_q^n$ and $\mathbf x_i^\prime=\mathbf x_i$ for all $i \in \mathrm{supp}(\mathbf x^\prime)$, then $\mathbf x^\prime$ is also a coset leader of $\mathcal C$.
\end{theorem}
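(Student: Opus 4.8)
The plan is to argue by contradiction, converting a hypothetical weight reduction in the coset of $\mathbf x^\prime$ into one in the coset of $\mathbf x$. First I would unpack the hypothesis: the condition $\mathbf x_i^\prime = \mathbf x_i$ for all $i \in \mathrm{supp}(\mathbf x^\prime)$ says precisely that $\mathrm{supp}(\mathbf x^\prime) \subseteq \mathrm{supp}(\mathbf x)$ and that $\mathbf x^\prime$ coincides with $\mathbf x$ on its own support; equivalently, $\mathbf x^\prime$ is obtained from $\mathbf x$ by setting some coordinates to zero. Write $S = \mathrm{supp}(\mathbf x^\prime)$ and $T = \mathrm{supp}(\mathbf x)$, so that $S \subseteq T$, $\mathrm{w_H}(\mathbf x^\prime) = |S|$ and $\mathrm{w_H}(\mathbf x) = |T|$.

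Suppose, for contradiction, that $\mathbf x^\prime$ is not a coset leader. Then there is a codeword $\mathbf c \in \mathcal C$ with $\mathrm{w_H}(\mathbf x^\prime - \mathbf c) < \mathrm{w_H}(\mathbf x^\prime) = |S|$. I would then compare this weight with $\mathrm{w_H}(\mathbf x - \mathbf c)$, where $\mathbf x - \mathbf c$ lies in the coset $\mathbf x + \mathcal C$. Partition the coordinates into the three blocks $S$, $T \setminus S$, and the complement of $T$. On $S$ we have $x_i^\prime = x_i$, so $\mathbf x^\prime - \mathbf c$ and $\mathbf x - \mathbf c$ have identical entries there; on the complement of $T$ both $\mathbf x^\prime$ and $\mathbf x$ vanish, so again the two differences agree (each equals $-\mathbf c$ there). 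Hence the only block on which $\mathrm{w_H}(\mathbf x - \mathbf c)$ and $\mathrm{w_H}(\mathbf x^\prime - \mathbf c)$ can differ is $T \setminus S$.

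Writing $A$ for the number of nonzero entries of $\mathbf x - \mathbf c$ on $S$ and $B$ for those on the complement of $T$ (these two counts are shared verbatim by $\mathbf x^\prime - \mathbf c$), the hypothesis gives $A + B \le \mathrm{w_H}(\mathbf x^\prime - \mathbf c) < |S|$, whence $A + B \le |S| - 1$. Since the contribution of $\mathbf x - \mathbf c$ on $T \setminus S$ is trivially at most $|T \setminus S|$, I obtain $\mathrm{w_H}(\mathbf x - \mathbf c) \le (|S| - 1) + |T \setminus S| = |T| - 1 < \mathrm{w_H}(\mathbf x)$, contradicting that $\mathbf x$ is a coset leader. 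This forces $\mathbf x^\prime$ to be a coset leader, as claimed.

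The step requiring the most care is the coordinate accounting on $T \setminus S$: this is exactly where the non-binary setting could cause trouble, since on those coordinates $\mathbf x^\prime$ was zeroed out while $\mathbf x$ retains a field value, so $x_i - c_i$ need not behave like $x_i^\prime - c_i = -c_i$. The key observation that makes the argument go through cleanly is that we never need to evaluate the $T \setminus S$ contribution of $\mathbf x - \mathbf c$ precisely; the crude bound by $|T \setminus S|$ already suffices, once the hypothesis has pinned the shared contribution $A + B$ strictly below $|S|$.
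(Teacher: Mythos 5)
Your proof is correct, and it is essentially the only proof available to compare against: the paper does not prove this statement itself (it quotes Huffman--Pless, Theorem 1.12.6.v), and your contradiction argument is exactly the standard one from that source, converting a lighter representative of $\mathbf x^\prime + \mathcal C$ into one of $\mathbf x + \mathcal C$. Your three-block bookkeeping can be compressed into one triangle inequality: if $\mathbf c \in \mathcal C$ satisfies $\mathrm{w_H}(\mathbf x^\prime - \mathbf c) < \mathrm{w_H}(\mathbf x^\prime)$, then $\mathrm{w_H}(\mathbf x - \mathbf c) \leq \mathrm{w_H}(\mathbf x - \mathbf x^\prime) + \mathrm{w_H}(\mathbf x^\prime - \mathbf c) < \mathrm{w_H}(\mathbf x - \mathbf x^\prime) + \mathrm{w_H}(\mathbf x^\prime) = \mathrm{w_H}(\mathbf x)$, the last equality holding because $\mathrm{supp}(\mathbf x - \mathbf x^\prime)$ and $\mathrm{supp}(\mathbf x^\prime)$ are disjoint.
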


In order to incrementally generate all coset leaders  starting  from $\mathbf 0$ 
adding elements in $\mathrm{Can}(\mathbb F_q)$, we must consider words with weight one more than the previous chosen coset leader. Next result is a byproduct of Theorem~\ref{TheoremH},  we may characterize which vectors we need to generate with weight one more than its coset leader in order to ensure  all coset leaders are generated.
\begin{theorem}
\label{Theorem1} Let $\mathbf x\in \mathrm{SF}(\mathbb F_q^n)$ be an element in  $\mathrm{CL}(\mathcal C)$, let $i \in \mathrm{supp}(\mathbf x)$. If $\mathbf x^\prime \in \mathbb F_q^n$ and $\mathbf x_j^\prime=\mathbf x_j$ for all $j \in \mathrm{supp}(\mathbf x^\prime)\setminus \{i\}$, then $\mathrm{w_H}(\mathbf x^\prime)\leq \mathrm{w_H}(\mathbf x^\prime+{\mathcal C})+1$.
\end{theorem}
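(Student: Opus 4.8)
The plan is to reduce the claim to Theorem~\ref{TheoremH} by repairing $\mathbf{x}'$ at the single coordinate $i$ where it is allowed to disagree with the coset leader $\mathbf{x}$. Concretely, I would introduce the auxiliary word $\mathbf{x}''\in\mathbb F_q^n$ defined by $\mathbf{x}''_j=\mathbf{x}'_j$ for $j\neq i$ and $\mathbf{x}''_i=\mathbf{x}_i$. The first step is to check that $\mathbf{x}''$ agrees with $\mathbf{x}$ on the whole of its support: for $j=i$ this holds by construction (and $\mathbf{x}_i\neq 0$ since $i\in\mathrm{supp}(\mathbf{x})$), while for $j\in\mathrm{supp}(\mathbf{x}'')\setminus\{i\}=\mathrm{supp}(\mathbf{x}')\setminus\{i\}$ the hypothesis gives $\mathbf{x}''_j=\mathbf{x}'_j=\mathbf{x}_j$. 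Hence Theorem~\ref{TheoremH} applies and $\mathbf{x}''\in\mathrm{CL}(\mathcal C)$, so that $\mathrm{w_H}(\mathbf{x}'')=\mathrm{w_H}(\mathbf{x}''+\mathcal C)$.

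Next I would set up two independent comparisons. On the one hand, since $\mathbf{x}'$ and $\mathbf{x}''$ coincide off the coordinate $i$ and $\mathbf{x}''_i=\mathbf{x}_i\neq 0$, we have $\mathrm{supp}(\mathbf{x}')\subseteq\mathrm{supp}(\mathbf{x}'')$, which yields the weight inequality $\mathrm{w_H}(\mathbf{x}')\leq\mathrm{w_H}(\mathbf{x}'')$ \emph{directly from support inclusion}. On the other hand, the difference $\mathbf{x}''-\mathbf{x}'$ is supported on $\{i\}$, so $\mathrm{w_H}(\mathbf{x}''-\mathbf{x}')\leq 1$; combining this with the fact that the coset-weight map $\mathbf{a}\mapsto\mathrm{w_H}(\mathbf{a}+\mathcal C)$ is the Hamming distance from $\mathbf{a}$ to $\mathcal C$ and is therefore $1$-Lipschitz with respect to $\mathrm{w_H}$, I obtain $\mathrm{w_H}(\mathbf{x}''+\mathcal C)\leq\mathrm{w_H}(\mathbf{x}'+\mathcal C)+1$. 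Chaining the three relations then finishes the argument:
\[
\mathrm{w_H}(\mathbf{x}')\;\leq\;\mathrm{w_H}(\mathbf{x}'')\;=\;\mathrm{w_H}(\mathbf{x}''+\mathcal C)\;\leq\;\mathrm{w_H}(\mathbf{x}'+\mathcal C)+1 .
\]

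The step I expect to be the main obstacle is getting the constant right. A careless application of the triangle inequality to pass from $\mathbf{x}'$ to $\mathbf{x}''$ both in the ambient weight and in the coset weight would spend the single-coordinate error twice and only deliver the weaker bound $\mathrm{w_H}(\mathbf{x}')\leq\mathrm{w_H}(\mathbf{x}'+\mathcal C)+2$. The point to be careful about is that the passage $\mathbf{x}'\to\mathbf{x}''$ must be controlled by \emph{support inclusion} rather than by the triangle inequality, precisely because filling in the coordinate $i$ with the nonzero value $\mathbf{x}_i$ can only keep or enlarge the support; this asymmetry is what confines the loss to a single unit and produces the sharp $+1$.
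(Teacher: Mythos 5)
Your proof is correct, and it takes a genuinely different route from the paper's. The paper argues by contradiction: it first notes, via Theorem~\ref{TheoremH}, that $\mathbf x-\mathbf x_i\in \mathrm{CL}(\mathcal C)$, then defers to the proof of Theorem 1.12.6.v in \cite{huffman:2003}, observing that if $\mathrm{w_H}(\mathbf x^\prime)\geq \mathrm{w_H}(\mathbf x^\prime+\mathcal C)+2$ one could splice a coset leader of $\mathbf x^\prime+\mathcal C$ into $\mathbf x$ and produce a vector of $\mathbf x+\mathcal C$ lighter than $\mathbf x$, contradicting $\mathbf x\in\mathrm{CL}(\mathcal C)$. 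You instead go in the opposite direction: rather than zeroing out the coordinate $i$ of the leader, you \emph{complete} $\mathbf x^\prime$ at coordinate $i$ with the value $\mathbf x_i$ to get $\mathbf x^{\prime\prime}$, verify the hypothesis of Theorem~\ref{TheoremH} so that $\mathbf x^{\prime\prime}\in\mathrm{CL}(\mathcal C)$, and then chain $\mathrm{w_H}(\mathbf x^\prime)\leq \mathrm{w_H}(\mathbf x^{\prime\prime})=\mathrm{w_H}(\mathbf x^{\prime\prime}+\mathcal C)\leq \mathrm{w_H}(\mathbf x^\prime+\mathcal C)+1$, where the first inequality comes from support inclusion and the last from the $1$-Lipschitz property of $\mathbf a\mapsto \mathrm{d_H}(\mathbf a,\mathcal C)$. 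All three links are sound (your case analysis covers $i\in\mathrm{supp}(\mathbf x^\prime)$ with $\mathbf x^\prime_i\neq\mathbf x_i$, $i\notin\mathrm{supp}(\mathbf x^\prime)$, and the degenerate case $\mathbf x^\prime_i=\mathbf x_i$ where $\mathbf x^{\prime\prime}=\mathbf x^\prime$), and your closing remark correctly identifies why the support-inclusion step, not a second triangle inequality, is what yields the sharp $+1$ rather than $+2$. What your version buys is a direct, self-contained derivation that makes the source of the constant transparent; what the paper's version buys is brevity on the page, since it reuses the Huffman--Pless splicing computation verbatim and only has to point out where the contradiction lands.
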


\begin{proof}
By Theorem~\ref{TheoremH}, $\mathbf x-\mathbf x_i\in \mathrm{CL}(\mathcal C)$. The proof of the Theorem is analogous to the proof of Theorem 1.12.6.v in \cite{huffman:2003}. Note that if we suppose that $\mathrm{w_H}(\mathbf x^\prime) \geq \mathrm{w_H}(\mathbf x^\prime+{\mathcal C})+2$ it would imply that $\mathbf x$ is not a coset leader, which is a contradiction.
\end{proof}

Let $\mathbf w\in \mathrm{SF}(\mathbb F_q^n)$ be an element in  $\mathrm{CL}(\mathcal C)$, and $(i,j)\in \mathrm{supp_{gen}}(\mathbf w)$. Let $\mathbf y\in \mathrm{SF}(\mathbb F_q^n)$ s.t. $\mathbf w = \mathbf y + \mathbf e_{ij}$ then, as a consequence of the previous theorem we have that 
\begin{equation}
{\mathrm{w_H}(\mathbf y) \leq \mathrm{w_H}(\mathbf y+\mathcal C)+1}.
\end{equation}

In the situation above we will say that the coset leader $\mathbf w$ is an \textit{ancestor} of the word $\mathbf y$, and that $\mathbf y$ is a \textit{descendant} of $\mathbf w$. In the binary case this definitions behave as the ones in \cite[\S11.7]{huffman:2003} but in the case $q\neq 2$ there is a subtle difference, a coset leader could be an ancestor of another coset leader or an ancestor of a word at Hamming distance $1$ to a coset leader (this last case is not possible in the binary case).

\subsection{The set ${\mathcal O}({\mathcal C})$}

{Given $\prec_1$ an admissible order on $(\mathbb N^{nm},+)$ we define the \textit{ weight compatible order} $\prec$ on  $(\mathbb F_q^n,+)$} associated to  $\prec_1$  as the ordering given by
\begin{enumerate}
  \item  $\mathbf x \prec \mathbf y$ if $\mathrm{w_H}(\mathbf x) < \mathrm{w_H}( \mathbf y)$ or \item if $\mathrm{w_H}( \mathbf x)= \mathrm{w_H}( \mathbf y)$ then $\Delta(\mathbf x) \prec_1 \Delta(\mathbf y)$.
\end{enumerate}
I.e. the words are ordered according their weights and the order  $\prec_1$ break ties. These class of orders is a subset of the class of monotone $\alpha$-orderings in \cite{klove}. In fact we will need a little more than monotonicity, for the purpose of this work we will also need that for every pair {$\mathbf v, \mathbf w\in \mathrm{SF}(\mathbb F_q^n)$
\begin{equation}
\label{eq:subpal}
\mbox{ if } \mathbf v\subset \mathbf w, \mbox{ then } \mathbf v \prec \mathbf w.
\end{equation}
}

Note that (\ref{eq:subpal}) is satisfied for a weight compatible order. In addition, for any weight compatible order $\prec$  every strictly decreasing sequence  terminates (due to the finiteness of the set $\mathbb F_q^n$).
{
In the binary case the behavior of the coset leaders 
can be translated to the fact that the set of coset leader is an order ideal of $\mathbb F_2^n$; whereas, for non binary linear codes this is no longer true even if we try to use the characterization of order ideals given in \cite{Braun}, where order ideals do not need to be associated with admissible orders.
}
\begin{definition}
\label{List-Def-LC}
We define  {\em the weak order ideal of the coset leaders} of ${\mathcal C}$ as the set ${\mathcal O}({\mathcal C})$ of elements in $\mathbb F_q^n$ verifying one of the following items:
\begin{enumerate}
\item $\mathbf 0\in {\mathcal O}({\mathcal C})$.
\item Criterion 1: If $\mathbf v \in {\mathcal O}({\mathcal C})$ and $\mathrm{w}_H(\mathbf v) = \mathrm{w}_H\left(\mathbf v+ \mathcal C\right)$ then

\centerline{$\left\{ \mathbf v + \mathbf e_{ij} \mid \Delta(\mathbf v) + \Delta(\mathbf e_{ij})\in \mathrm{SF}(\mathbb F_q^n) \right\} \subset {\mathcal O}({\mathcal C})$.} 
\item Criterion 2: If $\mathbf v \in {\mathcal O}({\mathcal C})$ and $\mathrm{w}_H(\mathbf v) = \mathrm{w}_H\left(\mathbf v+ \mathcal C\right)+1$ then 

\centerline{$\left\{\mathbf v + \mathbf e_{ij} \mid i \in \mathrm{supp}(\mathbf v),\, \Delta(\mathbf v) + \Delta(\mathbf e_{ij})\in \mathrm{SF}(\mathbb F_q^n) \right.,\left. \mathbf v- \mathbf v_i\in \mathrm{CL}(\mathcal C)\right\} \subset {\mathcal O}({\mathcal C})$.} 
\end{enumerate}
\end{definition}

\begin{remark}
\label{r:woidel}
{
It is clear by the {
Criteria 1 and 2} of the definition above that ${\mathcal O}({\mathcal C})$ is a weak order ideal. }
\end{remark}

\begin{theorem}
\label{Theorem3}
Let $\mathbf w\in \mathbb F_q^n$. If there exists $i\in 1,\ldots,n$ s.t. $\mathbf w-\mathbf w_i \in \mathrm{CL}(\mathcal C)$ then $\mathbf w \in {\mathcal O}({\mathcal C})$.
\end{theorem}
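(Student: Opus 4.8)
The plan is to run an induction on the weight compatible order $\prec$, which is well founded because $\mathbb F_q^n$ is finite. Call a word \emph{admissible} if it satisfies the hypothesis, i.e. $\mathbf w-\mathbf w_k\in\mathrm{CL}(\mathcal C)$ for some coordinate $k$; I want to show every admissible word lies in $\mathcal O(\mathcal C)$. The base case $\mathbf w=\mathbf 0$ holds since $\mathbf 0\in\mathcal O(\mathcal C)$. For the inductive step it is enough to produce one \emph{good predecessor}: a generator $\mathbf e_{kl}$ with $(k,l)\in\mathrm{supp}_{\mathrm{gen}}(\mathbf w)$ such that $\mathbf v:=\mathbf w-\mathbf e_{kl}$ (so that $\Delta(\mathbf v)+\Delta(\mathbf e_{kl})\in\mathrm{SF}(\mathbb F_q^n)$) is again admissible and qualifies as an ancestor of $\mathbf w$ under Criterion 1 or Criterion 2 of Definition~\ref{List-Def-LC}. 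Indeed $\mathbf v\subset\mathbf w$, so \eqref{eq:subpal} gives $\mathbf v\prec\mathbf w$ and the induction hypothesis yields $\mathbf v\in\mathcal O(\mathcal C)$; the ancestor property then forces $\mathbf w=\mathbf v+\mathbf e_{kl}\in\mathcal O(\mathcal C)$.

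The natural predecessor is obtained by peeling one $p$-adic digit off a \emph{building coordinate}, i.e. a coordinate $k$ with $\mathbf w-\mathbf w_k\in\mathrm{CL}(\mathcal C)$; the hypothesis guarantees at least $k=i$, and Theorem~\ref{TheoremH} often supplies several. For such a $k$ and $\mathbf v=\mathbf w-\mathbf e_{kl}$ one has $\mathbf v-\mathbf v_k=\mathbf w-\mathbf w_k\in\mathrm{CL}(\mathcal C)$, so $\mathbf v$ is admissible and the side conditions of Criterion 2 (namely $\mathbf v-\mathbf v_k\in\mathrm{CL}(\mathcal C)$, together with $k\in\mathrm{supp}(\mathbf v)$ whenever $\mathbf v_k\neq\mathbf 0$) are in place. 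The only thing left to check is the weight condition that selects the criterion: $\mathbf v$ must be a coset leader (Criterion 1) or satisfy $\mathrm{w_H}(\mathbf v)=\mathrm{w_H}(\mathbf v+\mathcal C)+1$ (Criterion 2), equivalently $\mathrm{w_H}(\mathbf v)\le\mathrm{w_H}(\mathbf v+\mathcal C)+1$.

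When $\mathbf w$ is itself a coset leader this is automatic: peeling any digit exhibits $\mathbf v$ as a descendant of the coset leader $\mathbf w$, and the inequality recorded as a consequence of Theorem~\ref{Theorem1} gives $\mathrm{w_H}(\mathbf v)\le\mathrm{w_H}(\mathbf v+\mathcal C)+1$; if moreover $\mathbf v_k=\mathbf 0$ then $\mathbf v=\mathbf w-\mathbf w_k\in\mathrm{CL}(\mathcal C)$ by Theorem~\ref{TheoremH} and Criterion 1 applies. The genuinely non-binary difficulty appears when $\mathbf w$ is not a coset leader: a careless choice of building coordinate may yield $\mathrm{w_H}(\mathbf v)=\mathrm{w_H}(\mathbf v+\mathcal C)+2$, a situation impossible over $\mathbb F_2$. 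For instance, for the ternary repetition code in $\mathbb F_3^2$ and $\mathbf w=(1,2)$ one must peel coordinate $1$ (reaching the leader $(0,2)$) rather than coordinate $2$ (which reaches the codeword $(1,1)$, whose weight exceeds its coset weight by $2$). Such a $\mathbf v$ lies in $\mathcal O(\mathcal C)$ but matches neither criterion, so the reduction fails for that choice.

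The main obstacle is therefore to prove that, when $\mathbf w$ is not a coset leader, some building coordinate admits a good peel, i.e. one keeping the excess $\mathrm{w_H}(\mathbf v)-\mathrm{w_H}(\mathbf v+\mathcal C)$ at most $1$. My plan is to control this excess through Theorems~\ref{TheoremH} and~\ref{Theorem1}. A weight dropping peel ($\mathbf v_k=\mathbf 0$) at a building coordinate always lands on the coset leader $\mathbf w-\mathbf w_k$, so only weight preserving peels are at issue. For these I would argue that a failure of the bound forces the coset $(\mathbf w-\mathbf w_i)+\mathcal C$ to contain, by Theorem~\ref{TheoremH}, a further coset leader whose $i$-th coordinate is nonzero; comparing that leader with $\mathbf w$ should single out an alternative building coordinate whose peel has excess at most $1$, which is all the induction step requires. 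Establishing that such a coordinate always exists is the crux; granting it, Criteria 1 and 2 close the induction.
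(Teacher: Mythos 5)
Your proposal correctly reproduces the skeleton of the paper's argument --- induction on the well-founded weight compatible order $\prec$, peeling one generalized digit $\mathbf e_{kl}$ off $\mathbf w$, using (\ref{eq:subpal}) to invoke the induction hypothesis on $\mathbf v = \mathbf w - \mathbf e_{kl}$, and closing with Criteria 1/2 of Definition~\ref{List-Def-LC} --- and your treatment of the case $\mathbf w \in \mathrm{CL}(\mathcal C)$ is essentially complete. But for the case $\mathbf w \notin \mathrm{CL}(\mathcal C)$ you deliver only a plan: you never prove that some building coordinate admits a peel with $\mathrm{w_H}(\mathbf v)\leq \mathrm{w_H}(\mathbf v + \mathcal C)+1$, and you concede this yourself (``establishing that such a coordinate always exists is the crux; granting it\dots''). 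Since the whole content of the inductive step in that case is exactly this existence statement, what you have is a proof sketch with the decisive step missing. Your closing suggestion --- that a failure of the bound forces, via Theorem~\ref{TheoremH}, another leader in $(\mathbf w - \mathbf w_i)+\mathcal C$ with nonzero $i$-th coordinate, from which an alternative building coordinate ``should'' emerge --- is not backed by any argument, and it is not obvious how to run it: the further leader need not be $\subset$-comparable with $\mathbf w$, so comparing it with $\mathbf w$ does not by itself yield a coordinate of $\mathbf w$ to peel.

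For comparison, the paper takes the direct route you reject: it peels at the hypothesis coordinate $i$ itself, so that $\mathbf v - \mathbf v_i = \mathbf w - \mathbf w_i \in \mathrm{CL}(\mathcal C)$ gives admissibility of $\mathbf v$, and then asserts $\mathrm{w_H}(\mathbf v)\leq \mathrm{w_H}(\mathbf v+\mathcal C)+1$ by appeal to Theorem~\ref{Theorem1}, after which Criteria 1 or 2 finish. Note that your ternary example bears on precisely this asserted step: for $\mathbf w = (1,2)$ with the legitimate witness $i=2$, the peel lands on $\mathbf v = (1,1)$, for which $\mathbf v - \mathbf v_2 = (1,0)\in \mathrm{CL}(\mathcal C)$ and yet $\mathrm{w_H}(\mathbf v) = \mathrm{w_H}(\mathbf v + \mathcal C)+2$; Theorem~\ref{Theorem1} does not apply here because it requires $i \in \mathrm{supp}(\mathbf x)$ for the comparison leader $\mathbf x$, whereas the available leader $\mathbf v - \mathbf v_2$ has $2$ outside its support. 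So the paper's argument goes through only for a suitable witness (here $i=1$, where the peel reaches the leader $(0,2)$ and Criterion 1 fires), and the paper does not justify that a good witness or digit always exists. Your diagnosis of the non-binary difficulty is therefore genuine and is a real contribution of your write-up, but it raises rather than lowers the bar: to turn the sketch into a proof you must actually establish the existence of a good peel (or a corrected, applicable form of the excess bound), which neither your text nor a literal reading of the appeal to Theorem~\ref{Theorem1} supplies.
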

\begin{proof}
We will proceed by induction on $\mathbb F_q^n$ with respect to the order $\prec$.
The statement is true for $\mathbf 0 \in \mathbb F_q^n$. Now for the inductive step, 
we assume that the desired property is true for any word $\mathbf u \in \mathbb F_q^n$ such that  there exists $i\in 1,\ldots,n$ s.t. $\mathbf u-\mathbf u_i \in \mathrm{CL}(\mathcal C)$ and also $\mathbf u$ is smaller than an arbitrary but fixed $\mathbf w\neq \mathbf 0$ with respect to $\prec$ and $\mathbf w-\mathbf w_j \in \mathrm{CL}(\mathcal C)$, for some $j\in 1,\ldots,n$, i.e.
$\hbox{if }\mathbf u-\mathbf u_i \in \mathrm{CL}(\mathcal C), \hbox{ for some } i \in 1,\ldots,n, \hbox{ and } \mathbf u \prec \mathbf w \hbox{ then }\mathbf u \in {\mathcal O}({\mathcal C}).$
We will show that the previous conditions imply that $\mathbf w$ is also in ${\mathcal O}({\mathcal C})$.

Let $\mathbf w=\mathbf v + \mathbf e_{ij}$, with $(i,j)\in \mathrm{supp_{gen}}(\mathbf w)$ then  $\mathbf v\prec \mathbf w$ by (\ref{eq:subpal}).  As $\mathbf w-\mathbf w_i \in \mathrm{CL}(\mathcal C)$, the same is true for $\mathbf v$, i.e, $\mathbf v-\mathbf v_i \in \mathrm{CL}(\mathcal C)$ then by the induction hypothesis we have that $\mathbf v \in {\mathcal O}({\mathcal C})$. By Theorem~\ref{Theorem1}, $\mathrm{w_H}(\mathbf v)\leq \mathrm{w_H}(\mathbf v+{\mathcal C})+1$; therefore, by Criteria 1 or 2 in Definition~\ref{List-Def-LC} it is guaranteed that $\mathbf w\in {\mathcal O}({\mathcal C})$.
\end{proof}

\begin{theorem}
\label{Theorem4}
Let $\mathbf w\in \mathbb F_q^n$ and $\mathbf w\in \mathrm{CL}(\mathcal C)$ then $\mathbf w \in {\mathcal O}({\mathcal C})$.
\end{theorem}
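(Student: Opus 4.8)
The plan is to derive this immediately from Theorem~\ref{Theorem3}, which already guarantees membership in $\mathcal{O}(\mathcal{C})$ for any word admitting a coordinate whose deletion yields a coset leader. Thus the only thing left to check is that an arbitrary coset leader $\mathbf w$ satisfies the hypothesis of Theorem~\ref{Theorem3}, namely that there exists an index $i \in \{1,\ldots,n\}$ with $\mathbf w - \mathbf w_i \in \mathrm{CL}(\mathcal{C})$. In other words, Theorem~\ref{Theorem4} will be obtained as a direct corollary once we verify the premise of Theorem~\ref{Theorem3} for coset leaders.

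If $\mathbf w = \mathbf 0$ the conclusion is immediate from item~1 of Definition~\ref{List-Def-LC}. Assume then $\mathbf w \neq \mathbf 0$ and fix any $i \in \mathrm{supp}(\mathbf w)$. Consider $\mathbf w' = \mathbf w - \mathbf w_i$, i.e. the word obtained from $\mathbf w$ by zeroing its $i$-th component. Then $\mathrm{supp}(\mathbf w') = \mathrm{supp}(\mathbf w)\setminus\{i\}$, and $\mathbf w'_j = \mathbf w_j$ for every $j \in \mathrm{supp}(\mathbf w')$. Since $\mathbf w$ is a coset leader, Theorem~\ref{TheoremH} applies to the pair $(\mathbf w,\mathbf w')$ and yields that $\mathbf w' = \mathbf w - \mathbf w_i$ is again a coset leader. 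Having produced an index $i$ with $\mathbf w - \mathbf w_i \in \mathrm{CL}(\mathcal{C})$, Theorem~\ref{Theorem3} then gives $\mathbf w \in \mathcal{O}(\mathcal{C})$, which is exactly the claim.

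The only genuine content of the argument is the invocation of Theorem~\ref{TheoremH}: one must recognize that deleting a single coordinate of a coset leader produces another coset leader, and that this is precisely the premise of Theorem~\ref{Theorem3}. I do not expect a real obstacle here, since the substantive work — the inductive construction over $\prec$ showing that every such ``one-coordinate-deletable'' word lies in $\mathcal{O}(\mathcal{C})$ — has already been carried out in the proof of Theorem~\ref{Theorem3}. The only point requiring a moment of care is checking that the agreement-on-support hypothesis of Theorem~\ref{TheoremH} indeed holds for $\mathbf w'$, which it trivially does because $\mathbf w'$ coincides with $\mathbf w$ on all coordinates of $\mathrm{supp}(\mathbf w')$.
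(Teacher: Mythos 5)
Your proof is correct and follows essentially the same route as the paper's: pick a coordinate $i$ in the support, observe that $\mathbf w - \mathbf w_i \in \mathrm{CL}(\mathcal C)$, and invoke Theorem~\ref{Theorem3}. The only cosmetic difference is that you cite Theorem~\ref{TheoremH} directly for the coset-leader-after-deletion step (which is indeed the underlying fact, as the paper's own proof of Theorem~\ref{Theorem1} notes), whereas the paper routes the citation through Theorem~\ref{Theorem1}; your separate treatment of $\mathbf w = \mathbf 0$ is a harmless extra precaution.
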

\begin{proof}
Let $(i,j)\in \mathrm{supp_{gen}}(\mathbf w)$, since $\mathbf w\in \mathrm{CL}(\mathcal C)$, by Theorem~\ref{Theorem1}, $\mathbf w-\mathbf w_i \in \mathrm{CL}(\mathcal C)$; then, by Theorem~\ref{Theorem3}, $\mathbf w \in {\mathcal O}({\mathcal C})$.
\end{proof}
{
The previous theorem has been shown that ${\mathcal O}({\mathcal C})$ contains the set of coset leaders of the linear code ${\mathcal C}$.
}

\subsection{{
{Zero neighbours and leader codewords}}}\label{s:LCwords}
\begin{definition}
\label{LCwords}
The set of \emph{leader codewords} of a linear code $\mathcal C$  is defined as
\begin{equation}\nonumber
\mathrm L(\mathcal C) =\left\{ 
\begin{array}{c}
\mathbf v_1 + \mathbf e_{ij} - \mathbf v_2  \in \mathcal C\setminus \{\mathbf 0 \} \mid 
\Delta(\mathbf v_1) + \Delta(\mathbf e_{ij}) \in \mathrm{SF}(\mathbb F_q^n),\\ \mathbf v_2 \in \mathrm{CL}(\mathcal C)
\mbox{ and }\mathbf v_1 - {\mathbf v_1}_i \in \mathrm{CL}(\mathcal C)
\end{array}
\right\}.
\end{equation}
\end{definition}

Note that the definition is a bit more complex that the one for binary codes in  \cite{borges:2014} due to the fact that in the general case not all coset leaders need to be ancestors of coset leaders. The name of leader codewords comes from the fact that one could compute all coset leaders of a corresponding word knowing the set $\mathrm L(\mathcal C)$   {
adapting} \cite[Algorithm~3]{borges:2014}.

\begin{remark}
The algorithm for computing $\mathrm L(\mathcal C)$ is based on the construction of ${\mathcal O}({\mathcal C})$. Theorem~\ref{Theorem3} guarantees that $\mathbf w\in {\mathcal O}({\mathcal C})$ provided that $\mathbf w-\mathbf w_i \in \mathrm{CL}(\mathcal C)$ for some $i$, then the associated set of leader codewords may be computed as $\{\mathbf w - \mathbf v:\,\mathbf w \in {\mathcal O}({\mathcal C}),\, \mathbf w-\mathbf w_i \in \mathrm{CL}(\mathcal C),\, \mathbf v \in \mathrm{CL}(\mathbf w)\mbox{ and } \mathbf v\neq \mathbf w\}$. 
 \end{remark}
 
\begin{theorem}[Properties of  $\mathrm L(\mathcal C)$]
\label{t:LCodW}
Let $\mathcal C$ be a linear code then
\begin{enumerate}
\item $\mathrm L(\mathcal C)$ is a test set for $\mathcal C$.
\item Let $\mathbf w$ be an element in  $\mathrm L(\mathcal C)$ then 
$\mathrm{w_H}(\mathbf w) \leq 2 \rho(\mathcal C) +1$ where $\rho(\mathcal C)$ is the covering radius of the code $\mathcal C$.
\item If $\mathbf w \in \mathrm L(\mathcal C)$ then $ \mathcal X (\mathrm D(\mathbf 0)) \cap (\mathrm D(\mathbf w)\cup \mathcal X \mathrm (D(\mathbf w)))\neq \emptyset.$
\item If $\mathcal X (\mathrm D (\mathbf 0)) \cap \mathrm D(\mathbf w) \neq \emptyset$ then $\mathbf w \in \mathrm L(\mathcal C)$.
\end{enumerate}
\end{theorem}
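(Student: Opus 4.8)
The plan is to base everything on the translation identity $\mathrm D(\mathbf w)=\mathbf w+\mathrm D(\mathbf 0)=\mathbf w+\mathrm{CL}(\mathcal C)$, which holds because $\mathbf y\in\mathrm D(\mathbf w)$ iff $\mathrm{w_H}(\mathbf y-\mathbf w)\le \mathrm{w_H}(\mathbf y-\mathbf w-\mathbf c')$ for every $\mathbf c'\in\mathcal C$, i.e. iff $\mathbf y-\mathbf w\in\mathrm{CL}(\mathcal C)$. With this in hand item (2) is immediate: writing $\mathbf w=\mathbf v_1+\mathbf e_{ij}-\mathbf v_2$ and putting $\mathbf u=\mathbf v_1+\mathbf e_{ij}$, the coordinate vector $\mathbf u_i$ has Hamming weight at most one, so $\mathrm{w_H}(\mathbf u)\le\mathrm{w_H}(\mathbf u-\mathbf u_i)+1=\mathrm{w_H}(\mathbf v_1-(\mathbf v_1)_i)+1\le\rho(\mathcal C)+1$, since the coset leader $\mathbf v_1-(\mathbf v_1)_i$ has weight at most the covering radius; likewise $\mathbf v_2\in\mathrm{CL}(\mathcal C)$ gives $\mathrm{w_H}(\mathbf v_2)\le\rho(\mathcal C)$, and subadditivity $\mathrm{w_H}(\mathbf w)\le\mathrm{w_H}(\mathbf u)+\mathrm{w_H}(\mathbf v_2)$ yields the bound $2\rho(\mathcal C)+1$.

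For item (4) I would start from a point $\mathbf a\in\mathcal X(\mathrm D(\mathbf 0))\cap\mathrm D(\mathbf w)$. Since $\mathbf a\in\mathrm D(\mathbf w)=\mathbf w+\mathrm{CL}(\mathcal C)$, the word $\mathbf v_2:=\mathbf a-\mathbf w$ is a coset leader; and $\mathbf a\in\mathcal X(\mathrm D(\mathbf 0))$ forces $\mathbf a\notin\mathrm{CL}(\mathcal C)$, whence $\mathbf w\neq\mathbf 0$. Because $\mathbf a\in\mathcal X(\mathrm D(\mathbf 0))$ there is a coset leader $\mathbf r$ with $\mathrm{d_H}(\mathbf a,\mathbf r)=1$, say $\mathbf a$ and $\mathbf r$ differ only in coordinate $i$; as $\mathbf a\notin\mathrm{CL}(\mathcal C)$ one checks $\mathbf a_i\neq 0$ (otherwise $\mathbf a=\mathbf r-\mathbf r_i$ would be a coset leader by Theorem~\ref{TheoremH}). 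Theorem~\ref{TheoremH} applied to $\mathbf r$ gives $\mathbf r-\mathbf r_i=\mathbf a-\mathbf a_i\in\mathrm{CL}(\mathcal C)$. Choosing any $(i,j)\in\mathrm{supp}_{\mathrm{gen}}(\mathbf a)$ and setting $\mathbf v_1=\mathbf a-\mathbf e_{ij}$, we get $\Delta(\mathbf v_1)+\Delta(\mathbf e_{ij})=\Delta(\mathbf a)\in\mathrm{SF}(\mathbb F_q^n)$ and $\mathbf v_1-(\mathbf v_1)_i=\mathbf a-\mathbf a_i\in\mathrm{CL}(\mathcal C)$, so $\mathbf w=\mathbf v_1+\mathbf e_{ij}-\mathbf v_2$ exhibits $\mathbf w\in\mathrm L(\mathcal C)$.

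For item (3), given $\mathbf w=\mathbf v_1+\mathbf e_{ij}-\mathbf v_2\in\mathrm L(\mathcal C)$ I take the candidate witness $\mathbf u=\mathbf v_1+\mathbf e_{ij}$. On one hand $\mathbf u-\mathbf w=\mathbf v_2\in\mathrm{CL}(\mathcal C)$, so $\mathbf u\in\mathrm D(\mathbf w)$; on the other hand $\mathbf u$ differs from $\mathbf v_1-(\mathbf v_1)_i\in\mathrm{CL}(\mathcal C)=\mathrm D(\mathbf 0)$ only in coordinate $i$, so the distance from $\mathbf u$ to $\mathrm D(\mathbf 0)$ is at most $1$. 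When $\mathbf u\notin\mathrm{CL}(\mathcal C)$ this distance is exactly $1$, hence $\mathbf u\in\mathcal X(\mathrm D(\mathbf 0))\cap\mathrm D(\mathbf w)$ and we are done. The only delicate case is the degenerate one $\mathbf u\in\mathrm{CL}(\mathcal C)$: then $\mathbf u\in\mathrm D(\mathbf 0)\cap\mathrm D(\mathbf w)$ is equidistant from $\mathbf 0$ and from $\mathbf w\neq\mathbf 0$, and I would instead alter one coordinate of $\mathbf u$ lying in $\mathrm{supp}(\mathbf w)$ so as to strictly decrease the distance to $\mathbf w$; the resulting neighbour $\mathbf a$ is strictly closer to $\mathbf w$ than to $\mathbf 0$, hence $\mathbf a\notin\mathrm D(\mathbf 0)=\mathrm{CL}(\mathcal C)$, so $\mathbf a\in\mathcal X(\mathrm D(\mathbf 0))$, while $\mathrm{d_H}(\mathbf a,\mathbf u)=1$ with $\mathbf u\in\mathrm D(\mathbf w)$ places $\mathbf a\in\mathrm D(\mathbf w)\cup\mathcal X(\mathrm D(\mathbf w))$.

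Finally, item (1) is the crux, and the non-binary bookkeeping is where I expect the main obstacle. Given $\mathbf y\notin\mathrm D(\mathbf 0)=\mathrm{CL}(\mathcal C)$, I must produce $\mathbf w\in\mathrm L(\mathcal C)$ with $\mathrm{w_H}(\mathbf y-\mathbf w)<\mathrm{w_H}(\mathbf y)$. I would take a maximal coordinate set $T\subseteq\mathrm{supp}(\mathbf y)$ for which the field-restriction $\mathbf y|_T$ is a coset leader (possible, since $\mathbf y|_\emptyset=\mathbf 0$), choose $i\in\mathrm{supp}(\mathbf y)\setminus T$, and set $\mathbf u=\mathbf y|_{T\cup\{i\}}$, which is not a coset leader by maximality while $\mathbf u-\mathbf u_i=\mathbf y|_T\in\mathrm{CL}(\mathcal C)$. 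Expressing $\mathbf u=\mathbf v_1+\mathbf e_{ij}$ by removing a single $p$-adic digit at coordinate $i$ and taking any $\mathbf v_2\in\mathrm{CL}(\mathbf u)$ with $\mathbf v_2\neq\mathbf u$ gives $\mathbf w=\mathbf u-\mathbf v_2\in\mathrm L(\mathcal C)$, $\mathbf w\neq\mathbf 0$. Crucially $\mathbf u$ agrees with $\mathbf y$ on all of $\mathrm{supp}(\mathbf u)$, so $\mathbf y-\mathbf u$ is supported off $\mathrm{supp}(\mathbf u)$ and $\mathrm{w_H}(\mathbf y-\mathbf w)\le\mathrm{w_H}(\mathbf y-\mathbf u)+\mathrm{w_H}(\mathbf v_2)=\big(\mathrm{w_H}(\mathbf y)-\mathrm{w_H}(\mathbf u)\big)+\mathrm{w_H}(\mathbf v_2)<\mathrm{w_H}(\mathbf y)$, the strict inequality coming from $\mathrm{w_H}(\mathbf v_2)<\mathrm{w_H}(\mathbf u)$ since $\mathbf u\notin\mathrm{CL}(\mathcal C)$. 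The subtle point I must get right is that adding one generator $\mathbf e_{ij}$ changes only a single $p$-adic digit, so it is the \emph{full field restriction} $\mathbf u=\mathbf y|_{T\cup\{i\}}$, rather than a digit-wise subword, that has to match $\mathbf y$ exactly on its support; this is precisely the phenomenon absent in the binary case, namely that a coset leader may be an ancestor of a word at Hamming distance one from a coset leader.
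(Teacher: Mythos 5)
Your proposal is correct, and for items \emph{1}, \emph{2} and \emph{4} it follows essentially the paper's own route: the greedy extension along $\mathrm{supp}(\mathbf y)$ until the restriction first leaves $\mathrm{CL}(\mathcal C)$, with a single $p$-adic digit peeled off at the offending coordinate (your maximal set $T$ versus the paper's first failing prefix of the support is an immaterial variation), the bound $\mathrm{w_H}(\mathbf w)\leq (\rho+1)+\rho$ from the decomposition $\mathbf w=\mathbf v_1+\mathbf e_{ij}-\mathbf v_2$, and the reconstruction of that decomposition from a witness in $\mathcal X(\mathrm D(\mathbf 0))\cap \mathrm D(\mathbf w)$; in item \emph{4} you actually justify the step $\mathbf a\notin\mathrm{CL}(\mathcal C)\Rightarrow \mathbf a_i\neq 0$ via Theorem~\ref{TheoremH} more explicitly than the paper, which simply asserts the shape $\mathbf u=\mathbf v_1+\mathbf e_{ij}$. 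You also rightly flag, and correctly handle, the genuinely non-binary point in item \emph{1}: it is the full field restriction $\mathbf y|_{T\cup\{i\}}$ that must agree with $\mathbf y$ on its support, while only one generalized digit is split off as $\mathbf e_{ij}$.

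Where you genuinely diverge is the degenerate case of item \emph{3}, when $\mathbf u=\mathbf v_1+\mathbf e_{ij}$ is itself a coset leader. The paper runs an incremental walk: it adds, one generalized digit at a time, the digits of $(\mathbf v_2)_l$ at a coordinate $l$ where the two leaders differ, and takes as witness the first point where the walk exits $\mathrm{CL}(\mathcal C)$ (its write-up even contains a sign slip, ${\mathbf v_2}(h)={\mathbf v_1^\prime}(h)-\mathbf w$, which as stated only holds for $p=2$). You instead jump in a single move: replace $u_l$ by $w_l$ at some $l\in\mathrm{supp}(\mathbf w)$ with $u_l\neq w_l$, and check $\mathrm{w_H}(\mathbf a)\geq \mathrm{w_H}(\mathbf u)=\mathrm{w_H}(\mathbf v_2)>\mathrm{w_H}(\mathbf a-\mathbf w)$, so $\mathbf a\notin\mathrm D(\mathbf 0)$ while $\mathrm{d_H}(\mathbf a,\mathbf u)=1$ with $\mathbf u\in\mathrm D(\mathbf 0)\cap\mathrm D(\mathbf w)$; this is shorter and avoids all the bookkeeping. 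The only step you assert without proof is the existence of such an $l$: if $\mathbf u$ and $\mathbf w$ agreed on all of $\mathrm{supp}(\mathbf w)$, then $\mathbf v_2=\mathbf u-\mathbf w$ would have support disjoint from $\mathrm{supp}(\mathbf w)$, whence $\mathrm{w_H}(\mathbf u)=\mathrm{w_H}(\mathbf w)+\mathrm{w_H}(\mathbf v_2)>\mathrm{w_H}(\mathbf v_2)$, contradicting that $\mathbf u$ and $\mathbf v_2$ are coset leaders of the same coset (recall $\mathbf w\neq\mathbf 0$). With that one-line supplement your argument is complete.
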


\begin{proof}
{
{ 1)} Let $\mathbf y \notin \mathrm{CL}(\mathcal C)$ and $\mathrm{supp}(\mathbf y) = \left\{ i_k:\;k=1,\ldots,l\right\}$, $l\leq n$. Let $s$ be such that $1\leq s<l$, $\mathbf v_1=\sum_{k=1}^s{\mathbf y}_{i_k}{\mathbf e}_{i_k}\in \mathrm{CL}(\mathcal C)$, $\mathbf v_1 + \mathbf y_{i_{S+1}}\mathbf e_{i_{S+1}}\notin \mathrm{CL}(\mathcal C)$ and $\mathbf v_2=\mathrm{CL}(\mathbf v_1 + \mathbf y_{i_{S+1}}\mathbf e_{i_{S+1}})$. Let $\,\mathbf y_{i_{S+1}}\mathbf e_{i_{S+1}}=\sum_{t=1}^z\mathbf e_{ij_t}$ and ${\mathbf v_1}^\prime=\mathbf v_1+\mathbf y_{i_{S+1}}\mathbf e_{i_{S+1}}-\mathbf e_{ij_Z}$. Then ${\mathbf v_1}^\prime- {\mathbf v_1}^\prime_i=\mathbf v_1\in \mathrm{CL}(\mathcal C)$ implies that $\mathbf w={\mathbf v_1}^\prime + \mathbf e_{ij_Z} - \mathbf v_2\in \mathrm L(\mathcal C)$. In addition,

\noindent $\mathrm{w_H}(\mathbf y - \mathbf w)=\mathrm{w_H}(\mathbf v_2+(\mathbf y - {\mathbf v_1}^\prime - \mathbf e_{ij_Z}))\leq \mathrm{w_H}(\mathbf v_2)+\mathrm{w_H}(\mathbf y - {\mathbf v_1}^\prime - \mathbf e_{ij_Z})$ and 
$\mathrm{w_H}(\mathbf v_2)+\mathrm{w_H}(\mathbf y - {\mathbf v_1}^\prime - \mathbf e_{ij_Z}) < \mathrm{w_H}({\mathbf v_1}^\prime + \mathbf e_{ij_Z}) + \mathrm{w_H}(\mathbf y - {\mathbf v_1}^\prime - \mathbf e_{ij_Z})$.

Note that $\mathrm{supp}({\mathbf v_1}^\prime + \mathbf e_{ij_Z}) \cap \mathrm{supp}(\mathbf y - {\mathbf v_1}^\prime - \mathbf e_{ij_Z})=\emptyset$; consequently, $\mathrm{w_H}(\mathbf y)=\mathrm{w_H}({\mathbf v_1}^\prime + \mathbf e_{ij_Z}) + \mathrm{w_H}(\mathbf y - {\mathbf v_1}^\prime - \mathbf e_{ij_Z})$ and $\mathrm{w_H}(\mathbf y - \mathbf w)<\mathrm{w_H}(\mathbf y)$. Thus, $\mathrm L(\mathcal C)$ is a test set.

{ 2)} Let $\mathbf c \in \mathrm L(\mathcal C)$ then there exists $\mathbf v_2 \in \mathrm{CL}(\mathcal C)$, $\mathbf v_1\in \mathrm{SF}(\mathbb F_q^n)$, $1\leq i\leq n$ and $1\leq j\leq m$ such that $\mathbf v_1 - {\mathbf v_1}_i \in \mathrm{CL}(\mathcal C)$ and $\mathbf c = \mathbf v_1 + \mathbf e_{ij} - \mathbf v_2$. Applying the definition of covering radius we have that $\mathrm{w}_H(\mathbf v_1 - {\mathbf v_1}_i), \mathrm{w}_H(\mathbf v_2) \leq \rho$, thus $\mathrm{w}_H(\mathbf c) \leq 2\rho +1$.

{ 3)} Let $\mathbf w\in \mathrm L(\mathcal C)$, then $\mathbf w=\mathbf v_1 + \mathbf e_{ij} - \mathbf v_2$, where $\mathbf v_1,\, \mathbf v_2$ are elements in $\mathbb F_q^n$ such that $\mathbf v_1 + \mathbf e_{ij} \in \mathrm{SF}(\mathbb F_q^n)$, $\mathbf v_2 \in \mathrm{CL}(\mathcal C)$ 
and $\mathbf v_1 - {\mathbf v_1}_i\in \mathrm{CL}(\mathcal C)$. 

\begin{itemize}
  \item If $\mathbf v_1 + \mathbf e_{ij} \notin \mathrm{CL}(\mathcal C)$, then $\mathbf v_1 + \mathbf e_{ij} \in \mathcal X (\mathrm D(\mathbf 0))$ and $(\mathbf v_1 + \mathbf e_{ij}) -\mathbf w =\mathbf v_2\in \mathrm{CL}(\mathcal C)$ implies that $\mathbf v_1 + \mathbf e_{ij} \in \mathrm D(\mathbf w)$. 

\item If $\mathbf v_1 + \mathbf e_{ij} \in \mathrm{CL}(\mathcal C)$ we define  ${\mathbf v_1^\prime}={\mathbf v_1^\prime}(0)=\mathbf v_1 + \mathbf e_{ij}$. It is clear that  ${\mathbf v_1^\prime},\,\mathbf v_2\in \mathrm{CL}(\mathbf v_2)$. Since $\mathbf w\neq \mathbf 0$ let $l$ be {a number in the set}  $\{1,\ldots,  n\}$ such that
{${\mathbf v_1^\prime}_l - {\mathbf v_2}_l\neq 0$}. Let ${\mathbf v_2}_l=\sum_{j=1}^T\,\mathbf e_{li_j}$, for $1\leq h\leq T$, $\mathbf {\mathbf v_1}^\prime(h)=\mathbf v_1^\prime + \sum_{j=1}^h\,\mathbf e_{li_j}$ and ${\mathbf v_2}(h)=\mathbf v_2 - \sum_{j=1}^h\,\mathbf e_{li_j}$. 

If there exists an $h$ ($1\leq h<T$) such that ${\mathbf v_1^\prime}(h)\notin \mathrm{CL}(\mathcal C)$ and ${\mathbf v_1^\prime}({h-1})\in \mathrm{CL}(\mathcal C)$ then these two conditions imply that 
${\mathbf v_1^\prime}(h)\in \mathcal X (\mathrm D(\mathbf 0))$. On the other hand, ${\mathbf v_2}(h)= {\mathbf v_1^\prime}(h) - \mathbf w$ is either  a coset leader (${\mathbf v_1^\prime}(h)\in \mathrm D(\mathbf w)$) or  $\mathrm{d_H}({\mathbf v_2}(h),{\mathbf v_2})=1$ (${\mathbf v_1^\prime}(h)\in \mathcal X (\mathrm D(\mathbf w))$). 

If there is no such and $h$ ($1\leq h<T$) satisfying the condition then $\mathrm{w_H}({\mathbf v_1^\prime}(T)) = {\mathrm w_H}({\mathbf v_2}(T))+1$, which means that ${\mathbf v_1^\prime}(T)$ is not a coset leader and ${\mathbf v_1^\prime}({T-1})$ is a coset leader. Then using the same idea of the previous paragraph we have that ${\mathbf v_1^\prime}(T)\in \mathcal X (\mathrm D(\mathbf 0))$ and ${\mathbf v_1^\prime}(T)\in \mathrm D(\mathbf w)\cup \mathcal X (\mathrm D(\mathbf w))$. 
\end{itemize}
}

{4) If $\mathcal X (\mathrm D (\mathbf 0)) \cap \mathrm D(\mathbf w) \neq \emptyset$, let $\mathbf u\in \mathcal X (\mathrm D (\mathbf 0)) \cap \mathrm D(\mathbf w)$. The first condition  $\mathbf u\in \mathcal X (\mathrm D (\mathbf 0))$ implies $\mathbf u=\mathbf v_1 +\mathbf e_{ij}$ for some $\mathbf v_1\in \mathrm{SF}(\mathbb F_q^n)$ 
, $(i,j)
\in \mathrm{supp_{gen}}(\mathbf u)$ and $\mathbf v_1 - {\mathbf v_1}_i\in \mathrm{CL}(\mathcal C)$. 
On the other hand, $\mathbf v_1+\mathbf e_{ij} \in \mathrm D(\mathbf w)$ implies that $\mathbf v_2=(\mathbf v_1 + \mathbf e_{ij}) - \mathbf w\in \mathrm{CL}(\mathcal C)$. Therefore, $\mathbf w= \mathbf v_1 + \mathbf e_{ij}- \mathbf v_2\in \mathrm L(\mathcal C)$.}
\end{proof} 

\begin{remark}
Note that item {\it 3} in Theorem~\ref{t:LCodW} implies that any leader codeword is a zero neighbour however, one of the differences with the binary case is that it is not always true that for a leader codeword $\mathbf w$ we have that $\mathcal X (\mathrm D (\mathbf 0)) \cap \mathrm D(\mathbf w) \neq \emptyset$, although by item {\it 4} we have that $\mathbf w$ is a leader codeword provided this condition is satisfied. Furthermore, item {\it 4} guarantees that the set of leader codewords contains all the minimal test set according to its cardinality (see \cite{barg:1998}). As a consequence of all these properties in Theorem~\ref{t:LCodW} we could say that the the set of leader codewords is a ``good enough'' subset  of the set of zero neighbours.
\end{remark}

\section{Correctable and uncorrectable errors}
\label{Tset-LCW}

We define the relation $\subset_1$ in the additive monoid which describe exactly the relation $\subset$ in the vector space $\mathbb F_q^n$. Given $\mathbf x,\mathbf y\in (\mathbb F_q^n,+)$
\begin{equation}
\label{eq:subset1}
 \mathbf x\subset_1 \mathbf y \hbox{ if } \mathbf x\subset \mathbf y \hbox{ and } \mathrm{supp}(\mathbf x)\cap \mathrm{supp}(\mathbf y-\mathbf x)=\emptyset.
\end{equation}

Note that this definition translates to $\mathbb F_q^n$ the binary case situation in \cite{klove}. In this case given a $\mathbf y\in (\mathbb F_q^n,+)$ { there are more words $x\in (\mathbb F_q^n,+)$  such that $\mathbf x\subset \mathbf y$ than if we consider $\mathbf x, \mathbf y$ as elements in the vector space $\mathbb F_q^n$}. Of course, any relation $\mathbf x\subset \mathbf y$ in $\mathbb F_q^n$ as a vector space it is also true in the additive monoid, {but it is not true the other way round}.

The set ${E}^0({\mathcal C})$ of \textit{correctable errors} of a linear code $\mathcal C$ is the set of  the minimal { elements with respect to}  $\prec$ in each coset. The elements of the set ${E}^1({\mathcal C})=\mathbb F_q^n\setminus {E}^0({\mathcal C})$ will be  called \textit{uncorrectable errors}. A {\em trial set} $T\subset {\mathcal C}\setminus \{{\mathbf 0}\}$ of the code ${\mathcal C}$ is a set which has the following property
\[
\label{eq:trialset}
{\mathbf y}\in {E}^0({\mathcal C}) \mbox{ if and only if } {\mathbf y} \preceq {\mathbf y} + {\mathbf c}, \mbox{ for all } {\mathbf c}\in T.\]

Since $\prec$ is a monotone $\alpha$-ordering on $\mathbb F_q^n$, the set of correctable and uncorrectable errors form a monotone structure. Namely, if $\mathbf x\subset_1 \mathbf y$ then $\mathbf x \in {E}^1({\mathcal C})$ implies $\mathbf y \in {E}^1({\mathcal C})$ and $\mathbf y \in {E}^0({\mathcal C})$ implies $\mathbf x \in {E}^0({\mathcal C})$. In the {general case $q\neq 2$ there is} a  difference with respect to the binary case, there may be words $\mathbf y^\prime\in \mathbb F_q^n$ s.t. $\mathrm{supp_{gen}}(\mathbf y^\prime)=\mathrm{supp_{gen}}(\mathbf y)$, {$\mathbf y^\prime\subset \mathbf y$} and $\mathbf y^\prime$ could be either a correctable error or an uncorrectable error, so, the monotone structure it is not sustained by $\subset$ in the additive monoid $(\mathbb F_q^n,+)$. 

Let the set of minimal uncorrectable errors $M^1({\mathcal C})$ be the set of $\mathbf y\in {E}^1({\mathcal C})$ such that, if $\mathbf x \subseteq_1 \mathbf y$ and $\mathbf x\in {E}^1({\mathcal C})$, then $\mathbf x= \mathbf y$. In a similar way, the set of maximal correctable errors is the set $M^0({\mathcal C})$ of elements $\mathbf x\in {E}^0({\mathcal C})$ such that, if $\mathbf x\subseteq_1 \mathbf y$ and $\mathbf y\in {E}^0({\mathcal C})$, then $\mathbf x= \mathbf y$.

For $\mathbf c\in {\mathcal C}\setminus \{\mathbf 0\}$, a {\em larger half} is defined as a minimal word $\mathbf u \subseteq_1 \mathbf c$ in the ordering $\preceq$ such that $\mathbf u - \mathbf c \prec \mathbf u$. The weight of such a word $\mathbf u$ is such that \[\mathrm{w_H}(\mathbf c)\leq 2\mathrm{w_H}(\mathbf u)\leq \mathrm{w}_H(\mathbf c)+ 2,\] see \cite{klove} for more details. The set of larger halves of a codeword $\mathbf c$ is denoted by $\mathrm{L_H}(\mathbf c)$, and for $U\subseteq {\mathcal C}\setminus \{\mathbf 0\}$ the set of larger halves for elements of $U$ is denoted by $\mathrm{L_H}(U)$. Note that $\mathrm{L_H}({\mathcal C})\subseteq {E}^1({\mathcal C})$.

For any ${\mathbf y\in \mathbb F_q^n}$, let $H(\mathbf y)=\{c\in {\mathcal C:\, \mathbf y- \mathbf c \prec \mathbf y}\}$, and we have $\mathbf y\in {E}^0({\mathcal C})$ if and only if $H(\mathbf y)=\emptyset$, and $\mathbf y\in {E}^1({\mathcal C})$ if and only if $H(\mathbf y)\neq \emptyset$. 

In \cite[Theorem~1]{klove}  there is a characterization of the set $M^1({\mathcal C})$ in terms of $H(\cdot)$ and larger halves of the set of minimal codewords $M({\mathcal C})$ for the binary case. It is easy to proof that this Theorem and    \cite[Corollary~3]{klove} are also true for any linear code.

\begin{proposition}[Corollary 3 in \cite{klove} ]
\label{prop:klove}
Let ${\mathcal C}$ be a linear code and $T\subseteq {\mathcal C}\setminus \{\mathbf 0\}$. The following statements are equivalent,:
\begin{enumerate}
\item $T$ is a trial set for ${\mathcal C}$.
\item If $\mathbf y \in M^1({\mathcal C})$, then $T\cap H(\mathbf y)\neq \emptyset$.
\item $M^1({\mathcal C})\subseteq \mathrm{L_H}(T)$.
\end{enumerate}
\end{proposition}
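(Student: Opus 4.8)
The plan is to establish the three‑way equivalence through two biconditionals, $(1)\Leftrightarrow(2)$ and $(2)\Leftrightarrow(3)$, following Kløve's binary argument but replacing the containment $\subseteq$ by the monoid relation $\subseteq_1$ throughout. First I would record the reformulation of the trial‑set property that makes everything tractable: since $E^0(\mathcal C)$ consists of the $\prec$-minimal element of each coset, any $\mathbf y\in E^0(\mathcal C)$ automatically satisfies $\mathbf y\preceq\mathbf y+\mathbf c$ for all $\mathbf c\in\mathcal C$, so one direction of the defining biconditional of a trial set holds for free. Hence $T$ is a trial set if and only if for every $\mathbf y\in E^1(\mathcal C)$ there is a $\mathbf c\in T$ that strictly decreases $\mathbf y$, i.e. $T\cap H(\mathbf y)\neq\emptyset$ (the sign discrepancy between the definitions of a trial set and of $H(\cdot)$ being reconciled by the linearity of $\mathcal C$). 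With this in hand $(1)\Rightarrow(2)$ is immediate, since $M^1(\mathcal C)\subseteq E^1(\mathcal C)$.

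For $(2)\Rightarrow(1)$ I would use the monotone structure of the correctable/uncorrectable errors. Given $\mathbf y\in E^1(\mathcal C)$, a $\subseteq_1$-minimal element $\mathbf x$ of $\{\mathbf x'\subseteq_1\mathbf y : \mathbf x'\in E^1(\mathcal C)\}$ exists by finiteness and lies in $M^1(\mathcal C)$ (this needs transitivity of $\subseteq_1$, a routine check). By hypothesis there is $\mathbf c\in T\cap H(\mathbf x)$, and the crux is a transfer lemma: if $\mathbf x\subseteq_1\mathbf y$ and $\mathbf x-\mathbf c\prec\mathbf x$, then $\mathbf y-\mathbf c\prec\mathbf y$, so the same $\mathbf c$ witnesses $T\cap H(\mathbf y)\neq\emptyset$. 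I expect this transfer lemma to be the main obstacle, because unlike the binary case it must hold for $\subseteq_1$ rather than $\subseteq$. Writing $\mathbf y=\mathbf x+\mathbf z$ with $\mathrm{supp}(\mathbf x)\cap\mathrm{supp}(\mathbf z)=\emptyset$, I would split coordinates into $\mathrm{supp}(\mathbf z)$ and its complement and compare Hamming weights: if $\mathrm{w_H}(\mathbf x-\mathbf c)<\mathrm{w_H}(\mathbf x)$ the strict inequality persists after adding $\mathbf z$, while in the tie case $\mathrm{w_H}(\mathbf x-\mathbf c)=\mathrm{w_H}(\mathbf x)$ one checks that $\mathbf y-\mathbf c$ either has strictly smaller weight than $\mathbf y$ or, failing that, $\mathbf c$ must be supported off $\mathrm{supp}(\mathbf z)$, in which case the conclusion follows from translation invariance of $\prec_1$ applied to $\Delta(\mathbf x-\mathbf c)\prec_1\Delta(\mathbf x)$ (the supports being disjoint, $\Delta$ is additive across them). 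This is precisely where $\subseteq_1$, and not merely $\subseteq$, is needed, reflecting the failure of the naive monotone structure noted in the text.

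The equivalence $(2)\Leftrightarrow(3)$ is where I would invoke the generalization of \cite[Theorem~1]{klove} relating $M^1(\mathcal C)$ to larger halves. The direction $(3)\Rightarrow(2)$ is direct from the definition of a larger half: if $\mathbf y\in M^1(\mathcal C)\subseteq\mathrm{L_H}(T)$ then $\mathbf y\in\mathrm{L_H}(\mathbf c)$ for some $\mathbf c\in T$, and a larger half satisfies $\mathbf y-\mathbf c\prec\mathbf y$, so $\mathbf c\in T\cap H(\mathbf y)$. For $(2)\Rightarrow(3)$ I would take $\mathbf y\in M^1(\mathcal C)$ and a witness $\mathbf c\in T\cap H(\mathbf y)$ supplied by (2); the generalized Theorem~1 guarantees that for a \emph{minimal} uncorrectable error $\mathbf y$ every $\mathbf c\in H(\mathbf y)$ makes $\mathbf y$ a larger half of $\mathbf c$ (in particular $\mathbf y\subseteq_1\mathbf c$ and $\mathbf y$ is $\preceq$-minimal with the descent property, with the quantitative form $\mathrm{w_H}(\mathbf c)\le 2\,\mathrm{w_H}(\mathbf y)\le\mathrm{w_H}(\mathbf c)+2$), whence $\mathbf y\in\mathrm{L_H}(\mathbf c)\subseteq\mathrm{L_H}(T)$. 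Thus the only genuinely new work beyond Kløve's scheme is checking that his Theorem~1, together with the transfer lemma above, survives the passage from $\subseteq$ to $\subseteq_1$; once these are secured the three statements are equivalent.
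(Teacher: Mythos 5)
The paper never actually proves this proposition: it is quoted from \cite{klove} with the one-line remark that ``it is easy to prove'' that Theorem~1 and Corollary~3 of \cite{klove} carry over to arbitrary linear codes. Your proposal therefore does more than the paper does, but along exactly the route the paper implicitly endorses: you reproduce Kl{\o}ve's scheme --- $(1)\Leftrightarrow(2)$ via the monotone structure, $(2)\Leftrightarrow(3)$ via Theorem~1 --- and you isolate the one genuinely non-binary ingredient, the transfer lemma for $\subseteq_1$. That lemma is correct as you sketch it: with $\mathbf y=\mathbf x+\mathbf z$ and $\mathrm{supp}(\mathbf x)\cap\mathrm{supp}(\mathbf z)=\emptyset$ one has $\mathrm{w_H}(\mathbf y)=\mathrm{w_H}(\mathbf x)+\mathrm{w_H}(\mathbf z)$ and $\mathrm{w_H}(\mathbf y-\mathbf c)\leq \mathrm{w_H}(\mathbf x-\mathbf c)+\mathrm{w_H}(\mathbf z)-\mathrm{w_H}(\mathbf c|_{\mathrm{supp}(\mathbf z)})$, so a strict weight drop persists, a tie with $\mathbf c$ meeting $\mathrm{supp}(\mathbf z)$ forces a strict drop, and in the remaining case $\Delta$ is additive across the disjoint supports so admissibility of $\prec_1$ finishes it --- precisely your case split, and precisely the place where $\subseteq_1$ rather than $\subseteq$ is indispensable, in agreement with the paper's remark that the monotone structure fails for $\subset$ in the monoid. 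Transitivity of $\subseteq_1$ (needed so that a $\subseteq_1$-minimal uncorrectable element below $\mathbf y$ lies in $M^1(\mathcal C)$) is indeed routine, since $\mathbf x\subseteq_1\mathbf y$ just says $\mathbf y$ agrees with $\mathbf x$ on $\mathrm{supp}(\mathbf x)$. Your $(2)\Rightarrow(3)$ leans on the generalized Theorem~1 of \cite{klove} without proof, but the paper itself invokes that generalization in the same unproven way, so this is a shared, acceptable debt rather than a defect of your argument alone.

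One point you flag but do not actually resolve: the sign discrepancy. With the paper's definitions, ``$T$ is a trial set'' is phrased with $\mathbf y+\mathbf c$ while $H(\mathbf y)=\{\mathbf c\in\mathcal C:\,\mathbf y-\mathbf c\prec\mathbf y\}$, and ``linearity of $\mathcal C$'' does not reconcile them, because linearity puts $-\mathbf c$ in $\mathcal C$ but not in $T$, which need not be symmetric. As literally stated, your chain proves the equivalences with $T$ replaced by $-T$ in one of the statements. The honest fix is either to read $H(\mathbf y)$ with a plus sign, as in \cite{klove} where $q=2$ hides the issue, or to observe that $T$ is a trial set if and only if $-T$ is, after which your argument goes through verbatim. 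This is a flaw inherited from the paper's own conventions rather than a gap in your reasoning, but in a complete write-up you should state which repair you adopt rather than appeal to linearity.
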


Now we will formulate the result which relates the trial sets {for a given weight compatible order $\prec$} and the set of leader codewords.
\begin{theorem}
\label{teo:ts-LCW}
Let $\mathcal C$ be a linear code and $\mathrm L(\mathcal C)$ the set of leaders codewords for $\mathcal C$, the following statements are satisfied.
\begin{enumerate}
\item $\mathrm L(\mathcal C)$ is a trial set for any given $\prec$.
\item Algorithm~2 in \cite{borges:2014} can be adapted to compute a set of leader codewords which is a trial set $T$ for a given $\prec$ such that satisfies the following property
\begin{description}
\item[] For any $\mathbf c\in T$, there exists $\mathbf y\in M^1({\mathcal C})\cap \mathrm{L_H}(\mathbf c)$ s.t. $\mathbf y- \mathbf c\in {E}^0({\mathcal C})$.
\end{description}
\end{enumerate}

\end{theorem}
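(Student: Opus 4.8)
The plan is to treat the two items separately, in both cases reducing to the characterization of trial sets in Proposition~\ref{prop:klove} and leaning on the properties of $\mathrm L(\mathcal C)$ collected in Theorem~\ref{t:LCodW}. Throughout I will use that $E^0(\mathcal C)$, being the set of $\prec$-minimal elements of the cosets and $\prec$ refining the weight, satisfies $E^0(\mathcal C)\subseteq \mathrm{CL}(\mathcal C)=\mathrm D(\mathbf 0)$, and that $\mathbf 0\in E^0(\mathcal C)$.

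For item 1 I would verify condition 2 of Proposition~\ref{prop:klove}, in fact establishing the stronger statement that $\mathrm L(\mathcal C)\cap H(\mathbf y)\neq\emptyset$ for every $\mathbf y\in E^1(\mathcal C)$, and split according to whether $\mathbf y$ is a coset leader. If $\mathbf y\notin \mathrm{CL}(\mathcal C)=\mathrm D(\mathbf 0)$, the test-set property (item 1 of Theorem~\ref{t:LCodW}) yields $\mathbf c\in\mathrm L(\mathcal C)$ with $\mathrm{w_H}(\mathbf y-\mathbf c)<\mathrm{w_H}(\mathbf y)$, hence $\mathbf y-\mathbf c\prec\mathbf y$ and $\mathbf c\in H(\mathbf y)$. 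If instead $\mathbf y\in\mathrm{CL}(\mathcal C)$ but $\mathbf y\notin E^0(\mathcal C)$, let $\mathbf y^*$ be the $\prec$-minimal leader of $\mathbf y+\mathcal C$; then $\mathbf c=\mathbf y-\mathbf y^*\in\mathcal C\setminus\{\mathbf 0\}$ and $\mathbf y-\mathbf c=\mathbf y^*\prec\mathbf y$. To see $\mathbf c\in\mathrm L(\mathcal C)$, pick $(i,j)\in\mathrm{supp_{gen}}(\mathbf y)$ (possible since $\mathbf y\neq\mathbf 0$) and set $\mathbf v_1=\mathbf y-\mathbf e_{ij}$, $\mathbf v_2=\mathbf y^*$; since $\mathbf y$ is a coset leader, Theorem~\ref{Theorem1} (via Theorem~\ref{TheoremH}) gives $\mathbf v_1-{\mathbf v_1}_i=\mathbf y-\mathbf y_i\in\mathrm{CL}(\mathcal C)$, while $\Delta(\mathbf v_1)+\Delta(\mathbf e_{ij})=\Delta(\mathbf y)\in\mathrm{SF}(\mathbb F_q^n)$, so $\mathbf c=\mathbf v_1+\mathbf e_{ij}-\mathbf v_2$ matches Definition~\ref{LCwords}. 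As these two cases exhaust $E^1(\mathcal C)$ and $M^1(\mathcal C)\subseteq E^1(\mathcal C)$, condition 2 of Proposition~\ref{prop:klove} holds and item 1 follows.

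For item 2 the core is a selection step: for each $\mathbf y\in M^1(\mathcal C)$ I would produce a leader codeword $\mathbf c$ that realizes $\mathbf y$ as a larger half with correctable complement, and then take $T$ to be the collection of these $\mathbf c$. First, minimality of $\mathbf y$ with respect to $\subseteq_1$ forces $\mathbf y-\mathbf y_i\in E^0(\mathcal C)$ for any $i\in\mathrm{supp}(\mathbf y)$, because $\mathbf y-\mathbf y_i\subseteq_1\mathbf y$ is a proper $\subseteq_1$-predecessor; since $\mathrm{d_H}(\mathbf y,\mathbf y-\mathbf y_i)=1$, this places $\mathbf y$ in $\mathcal X(\mathrm D(\mathbf 0))$. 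Next, by the extension of \cite[Theorem~1]{klove} to arbitrary linear codes (asserted just before Proposition~\ref{prop:klove}), $\mathbf y$ is a larger half of some $\mathbf c\in\mathcal C$ with $\mathbf y-\mathbf c\in E^0(\mathcal C)$; as $\mathbf y-\mathbf c\in E^0(\mathcal C)\subseteq\mathrm{CL}(\mathcal C)$ we get $\mathbf y\in\mathrm D(\mathbf c)$, hence $\mathbf y\in\mathcal X(\mathrm D(\mathbf 0))\cap\mathrm D(\mathbf c)$, and item 4 of Theorem~\ref{t:LCodW} certifies $\mathbf c\in\mathrm L(\mathcal C)$. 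Letting $T$ be these chosen leader codewords, we obtain $M^1(\mathcal C)\subseteq\mathrm{L_H}(T)$, so $T$ is a trial set by condition 3 of Proposition~\ref{prop:klove}, while the required property holds by construction with witness $\mathbf y$. The adaptation of \cite[Algorithm~2]{borges:2014} is then the computational counterpart: while generating $\mathcal O(\mathcal C)$ and $\mathrm L(\mathcal C)$, enumerate the minimal uncorrectable errors and retain, for each, exactly one witnessing larger-half leader codeword, discarding the others.

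The main obstacle is this selection step of item 2: transferring Kl{\o}ve's larger-half description of $M^1(\mathcal C)$ from minimal codewords to leader codewords, and guaranteeing simultaneously that the chosen $\mathbf c$ is a leader codeword (through the boundary criterion of item 4 of Theorem~\ref{t:LCodW}), that $\mathbf y\in\mathrm{L_H}(\mathbf c)$, and that $\mathbf y-\mathbf c$ stays in $E^0(\mathcal C)$. The genuinely non-binary difficulties, namely that a coordinate may carry several generators $\mathbf e_{ij}$ and that $\subseteq_1$ differs from the monoid order $\subseteq$, concentrate precisely here; I would therefore isolate the auxiliary claim ``$\mathbf y\in\mathcal X(\mathrm D(\mathbf 0))$ for every $\mathbf y\in M^1(\mathcal C)$'' as a separate lemma and check the larger-half weight and correctability bookkeeping carefully before invoking Proposition~\ref{prop:klove}.
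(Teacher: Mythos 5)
Your proof of item 1 is correct and is in essence the paper's own argument: the paper verifies condition 2 of Proposition~\ref{prop:klove} directly for $\mathbf y\in M^1(\mathcal C)$, using $\subseteq_1$-minimality to get $\mathbf y-\mathbf y_i\in E^0(\mathcal C)\subseteq\mathrm{CL}(\mathcal C)$ and then exhibiting the leader codeword $\mathbf c=\mathbf y-\mathbf v_2$ with $\mathbf v_2\in E^0(\mathcal C)\cap\mathrm{CL}(\mathbf y)$, whereas you prove the slightly stronger statement for all of $E^1(\mathcal C)$ by splitting into $\mathbf y\notin\mathrm{CL}(\mathcal C)$ (test-set property) and $\mathbf y\in\mathrm{CL}(\mathcal C)\setminus E^0(\mathcal C)$ (Theorem~\ref{TheoremH} plus a direct check of Definition~\ref{LCwords}). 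Both of your verifications, $\Delta(\mathbf v_1)+\Delta(\mathbf e_{ij})=\Delta(\mathbf y)\in\mathrm{SF}(\mathbb F_q^n)$ and $\mathbf v_1-{\mathbf v_1}_i=\mathbf y-\mathbf y_i\in\mathrm{CL}(\mathcal C)$, are sound.

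In item 2, however, the auxiliary lemma you isolate --- \lq\lq$\mathbf y\in\mathcal X(\mathrm D(\mathbf 0))$ for every $\mathbf y\in M^1(\mathcal C)$\rq\rq\ --- is false, because $M^1(\mathcal C)$ may contain coset leaders, and by the paper's definition $\mathcal X(A)$ consists of words at distance \emph{exactly} one from $A$, so any $\mathbf y\in\mathrm D(\mathbf 0)$ is excluded. Concretely, take $q=2$, $n=2$, $\mathcal C=\{(0,0),(1,1)\}$: the coset $\{(1,0),(0,1)\}$ has two leaders, one of which, say $(1,0)$, lies in $E^1(\mathcal C)$; its only proper $\subseteq_1$-predecessor is $\mathbf 0\in E^0(\mathcal C)$, so $(1,0)\in M^1(\mathcal C)$, yet $(1,0)\in\mathrm D(\mathbf 0)$ and hence $(1,0)\notin\mathcal X(\mathrm D(\mathbf 0))$. (For $q>2$ such ties among leaders are pervasive, which is exactly the non-binary phenomenon the paper emphasizes; the correct statement would be $M^1(\mathcal C)\subseteq\mathrm D(\mathbf 0)\cup\mathcal X(\mathrm D(\mathbf 0))$, mirroring the disjunction in item 3 of Theorem~\ref{t:LCodW}.) Consequently your appeal to item 4 of Theorem~\ref{t:LCodW} with witness $\mathbf y$ breaks down precisely for these $\mathbf y$. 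The repair is local and you already have it in hand: the Voronoi-boundary detour is unnecessary, since $\mathbf c\in\mathrm L(\mathcal C)$ follows directly from Definition~\ref{LCwords} --- $\subseteq_1$-minimality gives $\mathbf y-\mathbf y_i\in E^0(\mathcal C)\subseteq\mathrm{CL}(\mathcal C)$, $\mathbf y-\mathbf c\in E^0(\mathcal C)\subseteq\mathrm{CL}(\mathcal C)$, and writing $\mathbf y=\mathbf v_1+\mathbf e_{ij}$ for $(i,j)\in\mathrm{supp_{gen}}(\mathbf y)$ exhibits $\mathbf c=\mathbf v_1+\mathbf e_{ij}-\mathbf v_2$ --- which is the computation from your own item 1 and is how the paper proceeds: it restricts Steps 11--13 of Algorithm~2 of \cite{borges:2014} to $\mathbf t\in M^1(\mathcal C)$ paired with the unique correctable leader $\mathbf t_k$ of its coset and adds $\mathbf t-\mathbf t_k$, relying, as you do, on the asserted extension of \cite[Theorem~1]{klove} for the larger-half property $\mathbf y\in\mathrm{L_H}(\mathbf c)$. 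With that substitution your selection of $T$, the inclusion $M^1(\mathcal C)\subseteq\mathrm{L_H}(T)$, and the witness property all go through.
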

\begin{proof}$\,$\\
{Proof of 1) We will prove statement 2 of Proposition~\ref{prop:klove}. Let $\mathbf y \in M^1({\mathcal C})$, let $i$ such that $\mathrm{supp_{gen}}(\mathbf y)[i]\neq \emptyset$ and $\mathbf v_1=\mathbf y -\mathbf y_i$. Since $\mathbf y \in M^1({\mathcal C})$ we have that $\mathbf v_1\in {E}^0({\mathcal C})$, thus it is a coset leader. On the other hand, let $\mathbf v_2\in {E}^0({\mathcal C})$ such that $\mathbf v_2 \in \mathrm{CL}(\mathbf y)$ and $\mathbf c=\mathbf y - \mathbf v_2$. It is clear that $\mathbf c$ is a leader codeword and $\mathbf y - \mathbf c =\mathbf v_2\prec \mathbf y$. Therefore  $\mathbf c\in H(\mathbf y)$.
}

\noindent {Proof of 2) }In the Algorithm~2 in \cite{borges:2014}, as the first step, it is necessary to add to the function ${\tt InsertNext}$ the Criteria 2 of the construction of ${\mathcal O}({\mathcal C})$, whose elements are stored in ${\tt Listing}$. On the other hand, in the steps of the construction of the leader codewords (Steps 11 - 13) it is enough to state the condition $\mathbf t\in M^1({\mathcal C})$ for $\mathbf t$ and taking $\mathbf t_k$ only equal to the coset leader of $\mathrm{CL}(\mathbf t)$, that is the corresponding correctable error and add the codeword $\mathbf t-\mathbf t_k$ to the set $\mathrm L(\mathcal C)$.

\end{proof}


\begin{thebibliography}{99}

\bibitem{barg:1998} A. Barg, {\em Complexity issues in coding theory. In Handbook of coding theory}, {V}ol. {I}, North-Holland, Amsterdam, pp. 649--754, 1998.  

\bibitem{borges:2007a} M. Borges-Quintana, M.A. Borges-Trenard, E. Mart{\'{\i}}nez-Moro, {\em On a {G}r\"obner bases structure associated to linear codes}, J. Discret. Math. Sci. Cryptogr. 10(2), pp. 151--191, 2007.

\bibitem{borges:2014} M. Borges-Quintana, M.A. Borges-Trenard, I. M\'arquez-Corbella, E. Mart{\'{\i}}nez-Moro, {\em Computing coset leaders and leader codewords of binary codes}, Journal of Algebra and its Applications 14, 2015.

{
\bibitem{Braun} G. Braun, S. Pokutta. A polyhedral characterization of border bases. SIAM J. Discrete Math 30(1),  pp. 239--265, 2016.}

\bibitem{klove} T. Helleseth, T. Kl{\o}ve, I.L. Vladimir, {\em Error-correction capability of binary linear codes}, IEEE Transactions on Information Theory 51(4), pp. 1408--1423, 2005.  

\bibitem{kurshan:1972} R.P. Kurshan and N.J.A. Sloane. {\em Coset analysis of reed muller codes via translates of finite vector spaces}, Information and Control 20(5), pp. 410 - 414, 1972.

\bibitem{huffman:2003} W.C. Huffman, V. Pless, {\em Fundamentals of error-correcting codes}, Cambridge University Press, Cambridge, 2003.

\bibitem{macwilliams:1977} F. J. MacWilliams, N. J. A. Sloane. {\em The Theory of Error-Correcting Codes}, Elsevier/North Holland, 1997.

\bibitem{marquez:2011} I. M\'arquez-Corbella, E. Mart\'inez-Moro. {\em Algebraic structure of the minimal support codewords set of some linear codes}. Adv. Math. Commun. 5(2), pp. 233--244, 2011.
 
\bibitem{massey1993} J. L. Massey. {\em Minimal codewords and secret sharing}, Proceedings of the 6th Joint Swedish-Russian International Workshop on Information Theory, pp. 246--249, 1993.

{
\bibitem{moraSPES2} T. Mora. {\em Solving Polynomial Equation Systems II: Macaulay's Paradigm and
Gr\"{o}bner Technology}. Cambridge University Press, 2005.}

\bibitem{Zemor} G. Z\'emor, G. Cohen. {\em The threshold probability of a code}, IEEE Trans. Inform. Theory 41(2), pp. 469--477, 1995.

\end{thebibliography}
\end{document}